\newcommand{\Id}{\mathtt{I}}
\newcommand{\spn}[1]{\ensuremath{\mathrm{span}\left(#1\right)}}
\newcommand{\rank}[1]{\ensuremath{\mathrm{rank}\left(#1\right)}}
\newcommand{\norm}[1]{\ensuremath{\left\| #1 \right\|}}
\newtheorem{thm}{Theorem}
\newtheorem{lem}{Lemma}
\newtheorem{defn}{Definition}
\newtheorem{rmk}{Remark}
\begin{document}
\twocolumn[

\aistatstitle{Exact Subspace Segmentation and Outlier Detection by\\ Low-Rank Representation}

\aistatsauthor{ Anonymous Author 1 \And Anonymous Author 2 \And Anonymous Author 3 }

\aistatsaddress{ Unknown Institution 1 \And Unknown Institution 2 \And Unknown Institution 3 } ]

\begin{abstract}
In this work, we address the following matrix recovery problem: suppose we are given a set of data points containing two parts, one part consists of samples drawn from a union of multiple subspaces and the other part consists of outliers. We do not know which data points are outliers, or how many outliers there are. The rank and number of the subspaces are unknown either. Can we detect the outliers and segment the samples into their right subspaces, efficiently and exactly? We utilize a so-called {\em Low-Rank Representation} (LRR) method to solve this problem, and prove that under mild technical conditions, any solution to LRR exactly recovers the row space of the samples and detect the outliers as well. Since the subspace membership is provably determined by the row space, this further implies that LRR can perform exact subspace segmentation and outlier detection, in an efficient way.
\end{abstract}
\section{Introduction}
This paper is about the following problem: suppose we are given a data matrix $X$, each column of which is a data point, and we know it can be decomposed as
\begin{eqnarray}\label{equ.beginproblem}
X = X_0 + C_0,
\end{eqnarray}
where $X_0$ is a low-rank matrix with the column vectors drawn from a union of multiple subspaces, and $C_0$ is a column-sparse matrix that is non-zero in only a fraction of the columns. Except these mild restrictions, both components are arbitrary. In particular we do not know which columns of $C_0$ are non-zero, or how many non-zero columns there are. The rank of $X_0$ and the number of subspaces are unknown either. Can we recover the \emph{row space} of $X_0$, and the identities of the non-zero columns of $C_0$, efficiently and exactly? If so, under which conditions?

This problem is motivated from the famous \emph{subspace segmentation} problem (Costeira and Kanade, 1998; Eldar and Mishali, 2009; Elhamifar and Vidal, 2009; Fischler and Bolles, 1981; Gear, 1998; Gruber and Weiss, 2004; Liu et al., 2010b,c; Rao et al., 2010; Vidal, 2011; Ma et al., 2007, 2008), as often in computer vision and image processing applications, one observes data points drawn from the union of \emph{multiple} subspaces. The goal of subspace segmentation is to segment the samples into their respective subspaces. In fact, subspace segmentation can be regarded as a generalization of Principal Component Analysis (PCA) that has only {\em one} subspace. As such, similar to PCA, segmentation algorithms can be sensitive to the presence of outliers. In fact, because of the coupling between segmentation and outlier detection, robust subspace segmentation appears to be a challenging problem not ever well studied in theory.

Interestingly, as we show below in Section \ref{sec:pre:rowspace}, the row space of the data samples $X_0$ determines the correct segmentation. Thus, both subspace segmentation and outlier detection can be transformed into solving problem~(\ref{equ.beginproblem}), where the column support of $C_0$ indicates the outliers, and the row space of $X_0$ gives the segmentation result of the ``authentic'' samples.  To solve problem \eqref{equ.beginproblem}, we analyze the following convex optimization problem, termed {\em Low-Rank Representation (LRR)} (Liu et al., 2010b):
\begin{eqnarray}\label{eq:lrr}
 \min_{Z,C}  ||Z||_*+\lambda{||C||_{2,1}},& \textrm{ s.t. }& X = XZ+C,
\end{eqnarray}
where $\norm{\cdot}_*$ denotes the sum of the singular values, also known as \emph{nuclear norm} (Fazel, 2002),  the trace norm or Ky Fan norm; $\norm{\cdot}_{2,1}$ is called the $\ell_{2,1}$ norm and defined as the sum of $\ell_2$ norms of the columns of a matrix, and the parameter $\lambda>0$ is used to balance the effects of the two parts.

Using nuclear-norm based approach to tackle the subspace segmentation problem is not a new idea. In Liu et al. (2010b), the authors showed that if there is no outlier, then a formulation
\begin{eqnarray*}\min_{Z} ||Z||_* ,&\textrm{s.t.}& X = XZ,
\end{eqnarray*}
exactly solves the subspace segmentation problem. They further conjectured that in the presence of corruptions, the formulation (\ref{eq:lrr}) may be helpful. However, no theoretic analysis was offered. In contrast, we show that under mild conditions, both the row space of $X_0$ and the column support of $C_0$ can be recovered by solving problem~(\ref{eq:lrr}). Thus, one can simultaneously perform subspace segmentation and outlier detection in an efficient way.

While our analysis shares similar features as previous work in Robust Principal Component Analysis (RPCA, e.g., Cand{\`e}s et al., 2009; Xu et al., 2010), it is considerably more challenging due to the fact that the variable $Z$ is left-multiplied by a dictionary matrix $X$, and that the dictionary itself is contaminated by outliers. Also, it is worth noting that the problem of recovering {\em row space} with column-wise corruptions essentially cannot be addressed by existing RPCA methods (Torre and Black, 2001; Xu et al., 2010), which are designed for recovering the {\em column space} with column-wise corruptions. In this regard, LRR also has a unique role in solving the RPCA problem under the context of corrupted features (i.e., row-wise corruptions); that is, one can recover the column space with row-wise corruptions by solving the following transposed version of \eqref{eq:lrr}:
\begin{eqnarray*}
\min_{Z,C} ||Z||_*+\lambda{||C||_{2,1}},& \textrm{s.t.} &X^T = X^TZ+C.
\end{eqnarray*}
As discussed above, existing RPCA methods (e.g., Xu et al., 2010) that focus on recovering the column space with column-wise corruption are fundamentally unable to address this problem.
\section{Preliminaries}\label{sec:notion_pre}
For easy of reading, we introduce in this section some preliminaries, including the usage of mathematical notations, the concept of independent subspaces, the role of row space in subspace segmentation, and some previous results about recovering row space by LRR.
\subsection{Summary of Notations}
Capital letters such as $M$ are used to represent matrices, and accordingly, $[M]_i$ denotes the $i$-th column vector of $M$. Letters $U$, $V$ , $\mathcal{I}$ and their variants (complements, subscripts,
etc.) are reserved for column space, row space and column support, respectively. There are four associated projection operators we use throughout. The projection onto the column space, $U$, is denoted by $\mathcal{P}_U$
and given by $\mathcal{P}_U(M)=UU^TM$, and similarly for the row space $\mathcal{P}_V(M)=MVV^T$. Sometimes, we need to apply $\mathcal{P}_V$ on the left side of a matrix. This special operator is denoted by $\mathcal{P}_V^L$ and given by $\mathcal{P}_V^L(\cdot)=VV^T(\cdot)$. The matrix $\mathcal{P}_{\mathcal{I}}(M)$ is obtained from $M$ by setting column $[M]_i$ to zero for all $i\not\in\mathcal{I}$. Finally, $\mathcal{P}_T$ is the projection to the space spanned by $U$ and $V$, and given by $\mathcal{P}_T(\cdot) = \mathcal{P}_U(\cdot)+\mathcal{P}_V(\cdot)-\mathcal{P}_U\mathcal{P}_V(\cdot)$. Note that $\mathcal{P}_T$ depends on both $U$ and
$V$, and we suppress this notation wherever it is clear which $U$ and $V$ we are using. The complementary operators, $\mathcal{P}_{U^{\bot}}$, $\mathcal{P}_{V^{\bot}}$, $\mathcal{P}_{T^{\bot}}$, $\mathcal{P}_{V^{\bot}}^{L}$ and $\mathcal{P}_{\mathcal{I}^{c}}$ are defined as usual (Xu et al., 2010). The same notation is also used to represent a subspace of matrices: e.g., we write $M\in\mathcal{P}_{U}$ for any matrix $M$ that satisfies $\mathcal{P}_{U}(M)=M$. Five matrix norms are used: $\norm{M}_*$ is the nuclear norm, $\norm{M}_{2,1}$ is the sum of $\ell_2$ norms of the columns $[M]_i$, $\norm{M}_{2,\infty}$ is the largest $\ell_2$ norm of the columns, and $\|M\|_F$ is the Frobenius norm. The largest singular value of a matrix (i.e., the spectral norm) is $\norm{M}$, and the smallest positive singular value is denoted by $\sigma_{min}(M)$. The only vector norm used is $\norm{\cdot}_2$, the $\ell_2$ norm. Depending on the context, $\Id$ is either the identity matrix or the identity operator, and $\mathbf{e}_i$ is the $i$-th standard basis vector.

In particular, letters $X$, $Z$, $C$ and their variants (complements, subscripts, etc.) are reserved for the data matrix (also the dictionary), coefficient matrix (in LRR) and outlier matrix, respectively. The SVD of $X_0$ and $X$ are $U_0\Sigma_0V_0^T$ and $U_X\Sigma_{X}V_X^T$, respectively. We use $\mathcal{I}_0$ to denote the column support of $C_0$, $d$ the ambient data dimension, $n$ the total number of data points in $X$, $\gamma\triangleq{}|\mathcal{I}_0|/n$ the fraction of outliers, and $r_0$ the rank of $X_0$. For a convex function $f:\mathcal{R}^{m\times{m'}}\rightarrow\mathbb{R}$, we say that $Y$ is a subgradient of $f$ at $M$, denoted as $Y\in\partial{}f(M)$, if and only if $f(M')\geq{}f(M)+\langle{}M'-M,Y\rangle,\forall{}M'$. We also adopt the conventions of using $\spn{M}$ to denote the linear space spanned by the columns of a matrix $M$, using $y\in\spn{M}$ to denote that a vector $y$ belongs to the space $\spn{M}$, and using $Y\in\spn{M}$ to denote that all column vectors of $Y$ belong to $\spn{M}$.
\subsection{Independent Subspaces}
The concept of independence will be used in our analysis. Its definition is as follows:
\begin{defn}
A collection of $k$ ($k\geq2$) subspaces $\{\mathcal{S}_1,\mathcal{S}_2,\cdots,\mathcal{S}_k\}$ are independent if and only if $\mathcal{S}_i\cap\sum_{j\neq{i}}\mathcal{S}_j=\{0\}$.
\end{defn}
There is a concept closely related to the independence, namely the \emph{pairwise disjoint} assumption, which holds if and only if $\mathcal{S}_i\cap\mathcal{S}_j=\{0\},\forall{i\neq{j}}$, i.e., there is no intersection between any two subspaces. While there are only two subspaces (i.e., $k=2$), independence is equivalent to pairwise disjointness. While $k>2$, independence is a sufficient condition for pairwise disjointness, but not necessary.
\subsection{Relation Between Row Space and Segmentation}\label{sec:pre:rowspace}
The subspace memberships of the authentic samples are  determined by the row space $V_0$. Indeed, as shown in Costeira and Kanade (1998) and Gear (1998), when subspaces are independent, $V_0V_0^T$ forms a block-diagonal matrix: the $(i,j)$-th entry of $V_0V_0^T$ can be non-zero only if the $i$-th and $j$-th samples are from the same subspace. Hence, this matrix, termed as {\em Shape Iteration Matrix} (SIM) (Gear, 1998), has been widely used for subspace segmentation (Costeira and Kanade, 1998; Gear, 1998; Vidal, 2011). Previous approaches simply compute the SVD of the data matrix $X=U_X\Sigma_{X}V_X^T$ and then use $|V_XV_X^T|$ for subspace segmentation. However, in the presence of outliers, $V_X$ can be far away from $V_0$ and thus the segmentation using such approaches may be inaccurate. In contrast, we show that LRR can recover $V_0V_0^T$ even when data matrix $X$ are corrupted by outliers.
\begin{figure}
\begin{center}
\vspace{0.1in}
\includegraphics[width=0.28\textwidth]{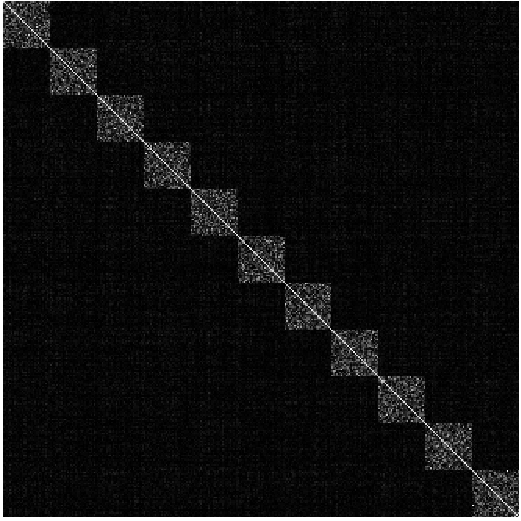}
\caption{An example of the matrix $V_0V_0^T$ computed from dependent subspaces. In this example, we create 11 pairwise disjoint subspaces each of which is of dimension 20, and draw 20 samples from each subspace. The ambient dimension is 200, which is smaller than the sum of the dimensions of the subspaces. So the subspaces are dependent and $V_0V_0^T$ is not strictly block-diagonal. Nevertheless, it is simple to see that high segmentation accuracy can be achieved by using the above similarity matrix to do spectral clustering.}\label{fig:sim}
\end{center}
\end{figure}

If the subspaces are not independent, $V_0V_0^T$ may not be block-diagonal. This is indeed well expected, since when the subspaces have nonzero (nonempty) intersections, then some samples may belong to multiple subspaces simultaneously. Nevertheless, when the subspaces are pairwise disjoint (but not independent), our extensive numerical experiments show that $V_0V_0^T$ is close to be block-diagonal, as exemplified in Figure \ref{fig:sim}. Hence, to recover $V_0V_0^T$ is still of interest to subspace segmentation. Note that the analysis in this work focuses on when $V_0V_0^T$ can be recovered, and hence does not rely on whether or not the subspaces are independent.
\subsection{Relation Between Row Space and LRR}
To better illustrate our intuition, we begin with the ``ideal'' case where there is no outlier in the data: i.e., $X=X_0$ and $C_0=0$. Thus, the LRR problem reduces to $\min_{Z}\norm{Z}_*\textrm{ s.t. }X_0=X_0Z$. As shown in (Liu et al., 2010a), this problem has a unique solution given by $Z^*=V_0V_0^T$, i.e., the solution of LRR identifies the row space of $X_0$ in this special case. Thus, when the data are contaminated by outliers, it is natural to consider problem \eqref{eq:lrr}. To see how LRR recovers the row space, we first establish the following lemma which can be simply deduced by Theorem 4.3 of Liu et al. (2010a).
\begin{lem}\label{lemma:lrr:solution}
For any optimal solution $(Z^*,C^*)$ to the LRR problem \eqref{eq:lrr}, we have that
$$Z^*\in{}\mathcal{P}_{V_X}^L,$$
where $V_X$ is the row space of $X$.
\end{lem}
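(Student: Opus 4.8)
The plan is to show that any optimal $Z^*$ is left-fixed by the projection onto the row space of $X$, via a perturbation argument: replacing $Z^*$ by its projection $\mathcal{P}_{V_X}^L(Z^*)$ keeps the pair feasible, does not increase the objective, and strictly decreases the nuclear norm unless $Z^*$ already lies in $\mathcal{P}_{V_X}^L$. Since $\spn{X^T}$ is exactly the column space of $V_X$, the condition $Z^*\in\mathcal{P}_{V_X}^L$ is the same as $Z^*\in\spn{X^T}$, so it suffices to establish the former.

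First I would set $\tilde{Z}=\mathcal{P}_{V_X}^L(Z^*)=V_XV_X^TZ^*$ and decompose $Z^*=\tilde{Z}+N$ with $N=(\Id-V_XV_X^T)Z^*$. Writing the thin SVD $X=U_X\Sigma_XV_X^T$ and using $V_X^TV_X=\Id$, one checks that $XV_XV_X^T=X$, so $X\tilde{Z}=XZ^*$. Hence $(\tilde{Z},C^*)$ is again feasible, $X=X\tilde{Z}+C^*$, with the identical penalty term $\lambda\norm{C^*}_{2,1}$; only the nuclear-norm part of the objective can have changed.

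The key step is the comparison $\norm{\tilde{Z}}_*\leq\norm{Z^*}_*$, with equality if and only if $N=0$. Because the columns of $\tilde{Z}$ lie in $\spn{V_X}$ while those of $N$ lie in its orthogonal complement, we have $\tilde{Z}^TN=0$, and therefore $(Z^*)^TZ^*=\tilde{Z}^T\tilde{Z}+N^TN$. Since $N^TN\succeq 0$, this gives $(Z^*)^TZ^*\succeq\tilde{Z}^T\tilde{Z}$, and operator monotonicity of the square root yields $((Z^*)^TZ^*)^{1/2}\succeq(\tilde{Z}^T\tilde{Z})^{1/2}$; taking traces gives $\norm{Z^*}_*\geq\norm{\tilde{Z}}_*$. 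Moreover, equality forces the positive-semidefinite difference $((Z^*)^TZ^*)^{1/2}-(\tilde{Z}^T\tilde{Z})^{1/2}$ to have zero trace, hence to vanish, so $N^TN=0$ and $N=0$.

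Combining the two steps, $(\tilde{Z},C^*)$ is feasible with objective $\norm{\tilde{Z}}_*+\lambda\norm{C^*}_{2,1}\leq\norm{Z^*}_*+\lambda\norm{C^*}_{2,1}$. Optimality of $(Z^*,C^*)$ then forces $\norm{\tilde{Z}}_*=\norm{Z^*}_*$, whence $N=0$ and $Z^*=\tilde{Z}\in\mathcal{P}_{V_X}^L$. I expect the main obstacle to be the strictness in the nuclear-norm inequality: the submultiplicative bound $\norm{AB}_*\leq\norm{A}\,\norm{B}_*$ with $A=V_XV_X^T$ delivers the non-strict estimate immediately, but it does not by itself characterize the equality case, which is precisely what is needed to conclude $N=0$. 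The operator-monotonicity argument above is what I would use to pin down that equality case cleanly.
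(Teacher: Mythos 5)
Your proof is correct. One point of comparison: the paper never proves this lemma itself---it is imported by citation, as ``implied by Theorem 4.3 of \citet{liu:2010:preprint}''---so your argument should be measured against that external source rather than any in-paper derivation, and it is essentially the same projection argument used there. Concretely, you set $\tilde{Z}=V_XV_X^TZ^*$, note that $XV_XV_X^T=X$ so that $(\tilde{Z},C^*)$ remains feasible with an unchanged $\ell_{2,1}$ term, and then show $\norm{\tilde{Z}}_*\leq\norm{Z^*}_*$ with equality only if $N=(\Id-V_XV_X^T)Z^*=0$. Your handling of the equality case is the part that matters, and it is sound: $\tilde{Z}^TN=0$ gives $(Z^*)^TZ^*=\tilde{Z}^T\tilde{Z}+N^TN$, operator monotonicity of the matrix square root gives $((Z^*)^TZ^*)^{1/2}\succeq(\tilde{Z}^T\tilde{Z})^{1/2}$, and a positive semidefinite matrix with zero trace must vanish, forcing $N^TN=0$ and hence $N=0$. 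This strictness is indispensable rather than a technicality: the lemma asserts that \emph{every} optimal solution lies in $\spn{X^T}$, so showing merely that the projection never increases the objective would only exhibit \emph{some} optimal solution in $\spn{X^T}$; your equality analysis is exactly what rules out an optimum with a nonzero component orthogonal to the row space, and it is also what the downstream uses of the lemma (e.g., in the proof of Theorem \ref{the:optimal-conditions}, where it is applied to arbitrary optimal pairs) actually require.
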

The above lemma states that the optimal solution (with respect to the variable $Z$) to LRR always locates within the row space of $X$. This provides us an important clue on the conditions for recovering $V_0V_0^T$ by $Z^*$.
\section{Settings and Results}\label{sec:results}
In this section we show that, under mild assumptions, LRR can \emph{exactly} recover both the row space of $X_0$ (i.e., the true SIM that encodes the subspace memberships of the samples) and the columns support of $C_0$ (i.e., the identities of the outliers), from $X$, as we detail below.

While several articles, e.g., Cand{\`e}s and Recht (2009); Cand{\`e}s et al. (2009) and Xu et al. (2010), have proven that the nuclear norm regularized optimization problems are powerful in dealing with corruptions including missed observations and outliers, it is considerably more challenging to establish the success conditions of LRR. This is partly due to  the bilinear interaction between the corrupted matrix $X=X_0+C_0$ and the unknown $Z$ in the equation $X_0+C_0=(X_0+C_0)Z+C$, which is essentially a matrix recovery task under a {\em noisy dictionary}, a topic not studied in literature to the best of our knowledge.  Moreover, our goal is to recover {\em row space} from column-wise corruptions. This is a new task not addressed by previous RPCA and matrix recovery methods that mainly focus on recovering column space (e.g., Cand{\`e}s et al., 2009; Cand{\`e}s and Plan, 2010; Cand{\`e}s and Recht, 2009; J. Devlin and Kettenring, 1981; Torre and Black, 2001; Wright et al., 2009; Xu
et al., 2010), and hence calls for new analysis tools.
\subsection{Problem Settings}\label{sec:assumptions}
We discuss in this subsection three conditions sufficient for LLR to succeed. Note that these conditions also reveal how the outliers and samples are defined in LRR.
\subsubsection{A Necessary Condition for Exact Recovery}
Suppose $(Z^*,C^*)$ is an optimal solution to \eqref{eq:lrr}, then Lemma \ref{lemma:lrr:solution} concludes that the column space of $Z^*$ is a subspace of $V_X$. Hence, for $Z^*$ (or a part of $Z^*$) to exactly recover $V_0$, $V_0$ must be a subspace of $V_X$, i.e., the following is a necessary condition:
\begin{eqnarray}\label{eq:setting:2}
V_0\in\mathcal{P}_{V_X}^L.
\end{eqnarray}
Note that if there are outliers that exactly lie on the subspaces, then $X$ will contain more samples than $X_0$ and thus the above condition is violated. So this condition can avoid the degenerative cases where the outliers palm themselves as subspace members. To show how it can hold, we establish the following lemma which show that \eqref{eq:setting:2} can be satisfied when the outliers are independent to the samples.
\begin{lem}\label{lemma:setting2:sufficient}
If $\spn{C_0}$ and $\spn{X_0}$ are independent to each other, i.e., $\spn{C_0}\cap\spn{X_0}=\{0\}$, then \eqref{eq:setting:2} holds.
\end{lem}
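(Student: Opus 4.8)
The plan is to unpack the target condition $V_0\in\mathcal{P}_{V_X}^L$ into a purely geometric statement about row spaces and then establish it with a short null-space argument. Recall $\mathcal{P}_{V_X}^L(\cdot)=V_XV_X^T(\cdot)$, where $V_XV_X^T$ is the orthogonal projector onto $\spn{V_X}$, the row space of $X$ (a subspace of $\mathbb{R}^n$). Since the columns of $V_0$ span the row space of $X_0$, the identity $V_XV_X^TV_0=V_0$ holds if and only if every column of $V_0$ is fixed by this projector, i.e. if and only if the row space of $X_0$ is contained in that of $X$. So I would first reduce the lemma to proving $\spn{X_0^T}\subseteq\spn{X^T}$.

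Next I would pass to orthogonal complements. In $\mathbb{R}^n$ the row space of a matrix is the orthogonal complement of its null space, so $\spn{X_0^T}\subseteq\spn{X^T}$ is equivalent to the reverse inclusion of null spaces, $\ker(X)\subseteq\ker(X_0)$. This is the convenient form to verify directly from the decomposition $X=X_0+C_0$.

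The core step is then immediate. Take any $v$ with $Xv=0$; then $X_0v+C_0v=0$, so $X_0v=-C_0v$. The left-hand side lies in $\spn{X_0}$ and the right-hand side lies in $\spn{C_0}$, hence this common vector lies in $\spn{X_0}\cap\spn{C_0}$. By the independence hypothesis this intersection is $\{0\}$, forcing $X_0v=0$. Thus $v\in\ker(X_0)$, which establishes $\ker(X)\subseteq\ker(X_0)$ and completes the chain back to $V_0\in\mathcal{P}_{V_X}^L$.

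I do not anticipate a serious obstacle in the computation, as the decisive step is a one-line application of independence once the reductions are in place. The only part demanding care is the bookkeeping in the first reduction: keeping straight that $\spn{\cdot}$ refers to column spaces in $\mathbb{R}^d$ while $\mathcal{P}_{V_X}^L$ acts on row-space objects in $\mathbb{R}^n$, and verifying that $V_XV_X^TV_0=V_0$ is genuinely \emph{equivalent} to the row-space inclusion rather than merely implied by it. Stating the duality between row-space inclusion and null-space inclusion in the correct direction is where a direction error would most likely creep in, so I would double-check that containment of row spaces does correspond to the \emph{reverse} containment of null spaces before invoking it.
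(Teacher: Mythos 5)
Your proof is correct, and it takes a genuinely different route from the paper's. The paper argues constructively on the column-space side: from the SVDs $X_0=U_0\Sigma_0V_0^T$ and $C_0=U_C\Sigma_CV_C^T$, independence implies that $[U_0^{\bot},U_C^{\bot}]$ spans the whole ambient space $\mathbb{R}^d$, so the linear system $U_0^{\bot}(U_0^{\bot})^TY_0+U_C^{\bot}(U_C^{\bot})^TY_C=\Id$ is feasible; taking $Y=\Id-U_0^{\bot}(U_0^{\bot})^TY_0$ yields an oblique projector satisfying $X_0^TY=X_0^T$ and $C_0^TY=0$, hence $X_0$ is a left multiple of $X$, which exhibits every row of $X_0$ as a linear combination of rows of $X$ and so gives $V_0\in\mathcal{P}_{V_X}^L$. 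You instead dualize: after reducing $V_XV_X^TV_0=V_0$ to the row-space inclusion $\spn{X_0^T}\subseteq\spn{X^T}$ (a correct equivalence, since $V_XV_X^T$ is the orthogonal projector onto $\spn{V_X}=\spn{X^T}$), you pass to null spaces, where the inclusion correctly reverses to $\ker(X)\subseteq\ker(X_0)$, and independence finishes in one line: $Xv=0$ forces $X_0v=-C_0v\in\spn{X_0}\cap\spn{C_0}=\{0\}$. Both proofs use the independence hypothesis exactly once, but in dual ways: the paper uses it to guarantee existence of a separating (oblique) projector, while you use it to show that no nonzero vector can be simultaneously a column-combination of $X_0$ and of $-C_0$. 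What each approach buys: yours is shorter and more elementary, requiring no SVDs, no feasibility argument for a linear system, and no explicit construction; the paper's is constructive and produces a concrete annihilator of the outlier directions, a device of the same flavor as the $(U_C^{\bot})^T$ manipulation reused in the proof of Lemma \ref{lemma:beta:bound}, though the certificate $Y$ itself is never needed downstream, so nothing is lost by your argument.
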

\subsubsection{Relatively Well-Definedness}
\begin{figure}
\begin{center}
\includegraphics[width=0.45\textwidth]{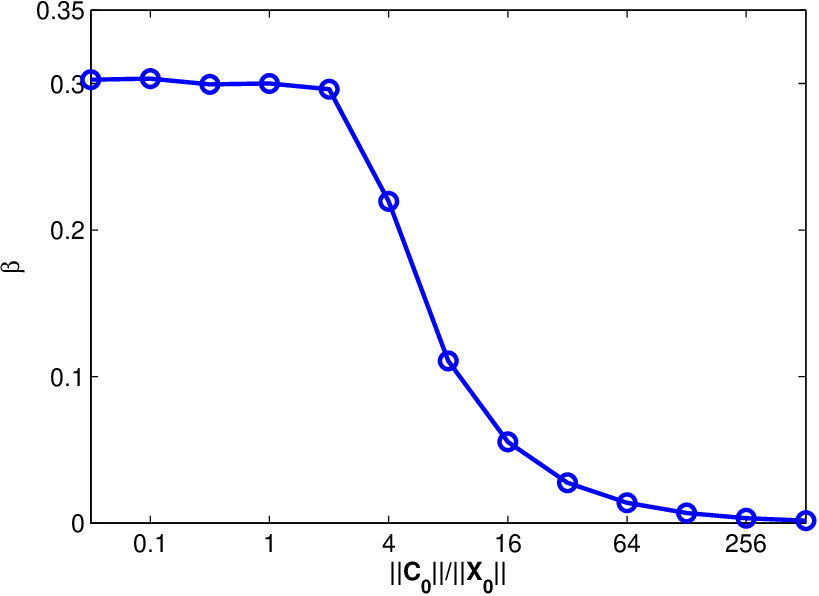}
\caption{Plotting the RWD parameter $\beta=1/(\|X\|\|\Sigma_X^{-1}V_X^TV_0\|)$ as a function of the relative magnitude $\|C_0\|/\|X_0\|$. These results are from our numerical experiments. In those experiments, the outlier fraction is fixed to be $\gamma=0.5$, and the outlier magnitude is varied for investigation. The matrices $X_0$ and $C_0$ are generated in a similar way as in Section \ref{sec:diss}.} \label{fig:pose}
\end{center}
\end{figure}
To reveal the success conditions of LRR, as mentioned, one technical challenge comes from the bilinear interaction between the corrupted matrix $X=X_0+C_0$ and the unknown $Z$ in the equation $X=XZ+C$. Actually, this issue also makes the equation $X=XZ+C$ ``seems'' questionable, because the data matrix $X$ (which itself contains outliers) is used as the dictionary for reconstruction. Nevertheless, we show that the success of LRR can be exactly ensured if $X$ satisfies the following \emph{relatively well-defined} (RWD) condition.
\begin{defn}
The dictionary $X$ generated by $X=X_0+C_0$, with SVD $X=U_X\Sigma_XV_X^T$ and $X_0=U_0\Sigma_0V_0^T$, is said to be RWD (with regard to $X_0$) with parameter $\beta$ if
\begin{eqnarray}\label{eq:beta}
\|\Sigma_X^{-1}V_X^TV_0\|\leq\frac{1}{\beta\|X\|}.
\end{eqnarray}
\end{defn}
To ensure the success of LRR, we require that the RWD parameter $\beta$ is not extremely small. If $X$ is perfectly well-defined (e.g., $r_0=1$ and $C_0=0$), then $\beta=1$. Without any assumptions, the above definition implies that $\beta$ is bounded by
\begin{eqnarray*}
\beta\geq{}\frac{1}{cond(X)},
\end{eqnarray*}
where $cond(X)=\|X\|/\sigma_{min}(X)$ is the condition number of $X$. This bound, however, does not guarantee the validity of RWD when $X$ is severely singular, i.e., $\sigma_{min}(X)\rightarrow0$ (this is possible in the presence of outliers). Fortunately, we show that the RWD parameter $\beta$ can be reasonably large under practical assumptions, e.g., the outlier magnitude is not extremely large. More precisely, we have the following lemma that estimates a lower bound of $\beta$.
\begin{lem}\label{lemma:beta:bound}
If $\spn{C_0}$ and $\spn{X_0}$ are independent to each other, then
\begin{eqnarray*}
\beta\geq{}\frac{\sin(\theta)}{cond(X_0)(1+\frac{\|C_0\|}{\|X_0\|})},
\end{eqnarray*}
where $cond(X_0)=\|X_0\|/\sigma_{min}(X_0)$ is the condition number of $X_0$, and $\theta>0$ is the smallest principal angle between $\spn{C_0}$ and $\spn{X_0}$.
\end{lem}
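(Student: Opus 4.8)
The plan is to unwind the definition of the RWD parameter and reduce the claim to a single spectral-norm estimate. By the defining inequality \eqref{eq:beta}, the largest admissible value of $\beta$ is $1/(\|X\|\,\|\Sigma_X^{-1}V_X^TV_0\|)$, so it suffices to prove
$$\|X\|\,\|\Sigma_X^{-1}V_X^TV_0\|\leq\frac{cond(X_0)\,(1+\|C_0\|/\|X_0\|)}{\sin(\theta)}.$$
For the first factor the triangle inequality gives $\|X\|=\|X_0+C_0\|\leq\|X_0\|+\|C_0\|=\|X_0\|(1+\|C_0\|/\|X_0\|)$, so, recalling $cond(X_0)=\|X_0\|/\sigma_{min}(X_0)$, the claim reduces to the key estimate
$$\|\Sigma_X^{-1}V_X^TV_0\|\leq\frac{1}{\sigma_{min}(X_0)\,\sin(\theta)}.$$

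To attack the key estimate I would first rewrite the left-hand side through a pseudoinverse. Since $U_X$ has orthonormal columns, $\|\Sigma_X^{-1}V_X^TV_0\|=\|U_X\Sigma_X^{-1}V_X^TV_0\|=\|(X^T)^+V_0\|$, which equals $\max\{\|(X^T)^+v\|_2:v\in\spn{V_0},\ \|v\|_2=1\}$, the maximum being over unit vectors $v$ in the row space of $X_0$. Fix such a $v$. The independence hypothesis together with Lemma \ref{lemma:setting2:sufficient} yields $V_0\in\mathcal{P}_{V_X}^L$, i.e.\ the row space of $X_0$ lies inside the row space of $X$; hence $v\in\spn{X^T}$ and $(X^T)^+v$ is the \emph{minimum-norm} solution $w$ of $X^Tw=v$. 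Consequently any feasible $w$ with $X^Tw=v$ furnishes an upper bound $\|(X^T)^+v\|_2\leq\|w\|_2$, and the task becomes constructing one such $w$ of small norm.

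For the construction, write $v=X_0^Tu$ with $u$ chosen in $\spn{X_0}$ (the column space of $X_0$); then $\|X_0^Tu\|_2\geq\sigma_{min}(X_0)\|u\|_2$ forces $\|u\|_2\leq1/\sigma_{min}(X_0)$. Set $\mathcal{A}=\spn{X_0}$ and $\mathcal{B}=\spn{C_0}$, two subspaces of $\Re^d$ whose smallest principal angle is $\theta>0$ by independence. I would seek $w$ satisfying $w\perp\mathcal{B}$ and $\mathcal{P}_{\mathcal{A}}(w)=u$: the first condition gives $C_0^Tw=0$ and the second gives $X_0^Tw=X_0^Tu$, so that $X^Tw=X_0^Tw+C_0^Tw=v$ as required. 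Such a $w$ exists because the self-adjoint operator $T=\mathcal{P}_{\mathcal{A}}\mathcal{P}_{\mathcal{B}^{\bot}}\mathcal{P}_{\mathcal{A}}$ restricted to $\mathcal{A}$ satisfies $\langle Tu,u\rangle=\|\mathcal{P}_{\mathcal{B}^{\bot}}u\|_2^2\geq\sin^2(\theta)\|u\|_2^2$ (here I use $\|\mathcal{P}_{\mathcal{B}}u\|_2\leq\cos(\theta)\|u\|_2$ for $u\in\mathcal{A}$), so $T$ is invertible on $\mathcal{A}$; taking $w=\mathcal{P}_{\mathcal{B}^{\bot}}T^{-1}u$ one checks directly that $\mathcal{P}_{\mathcal{A}}(w)=u$ and $w\in\mathcal{B}^{\bot}$.

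Finally I would estimate $\|w\|_2$ sharply. Using idempotence and self-adjointness of $\mathcal{P}_{\mathcal{B}^{\bot}}$ together with $T^{-1}u\in\mathcal{A}$, a short computation gives $\|w\|_2^2=\langle T^{-1}u,u\rangle\leq\|T^{-1}\|\,\|u\|_2^2\leq\|u\|_2^2/\sin^2(\theta)$, whence $\|w\|_2\leq\|u\|_2/\sin(\theta)\leq1/(\sigma_{min}(X_0)\sin(\theta))$; combining with the reduction above proves the lemma. The delicate point is precisely this last estimate: the naive bound $\|w\|_2=\|\mathcal{P}_{\mathcal{B}^{\bot}}T^{-1}u\|_2\leq\|T^{-1}\|\,\|u\|_2$ would only yield $1/\sin^2(\theta)$, and recovering the advertised $1/\sin(\theta)$ hinges on the quadratic-form identity $\|\mathcal{P}_{\mathcal{B}^{\bot}}T^{-1}u\|_2^2=\langle T^{-1}u,u\rangle$. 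The other source of care is bookkeeping the two ambient spaces, namely $\Re^d$, where the column-space geometry and the angle $\theta$ live, versus $\Re^n$, where $v$ and $V_0$ reside.
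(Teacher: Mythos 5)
Your proof is correct, but it takes a genuinely different route from the paper's. The paper argues purely algebraically with SVD factors: it left-multiplies $X=X_0+C_0$ by $(U_C^{\bot})^T$ (with $U_C$ the column space of $C_0$) to annihilate the outlier term, obtains $(U_C^{\bot})^TU_X=(U_C^{\bot})^TU_0\Sigma_0(V_0^TV_X\Sigma_X^{-1})$, inverts $(U_C^{\bot})^TU_0\Sigma_0$ --- legal because independence makes $(U_C^{\bot})^TU_0$ full column rank --- and bounds the result by $\|\Sigma_0^{-1}\|\|\Sigma_1^{-1}\|$, citing an external reference for the fact that the smallest singular value of $(U_C^{\bot})^TU_0$ equals $\sin(\theta)$. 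You instead identify $\|\Sigma_X^{-1}V_X^TV_0\|$ as the norm of the pseudoinverse $(X^T)^+$ restricted to $\spn{V_0}$, invoke the minimum-norm-solution characterization, and exhibit a feasible $w$ with $X^Tw=v$ that is orthogonal to $\spn{C_0}$, built from the operator $T=\mathcal{P}_{\mathcal{A}}\mathcal{P}_{\mathcal{B}^{\bot}}\mathcal{P}_{\mathcal{A}}$; the sharp $1/\sin(\theta)$ (rather than $1/\sin^2(\theta)$) indeed hinges on your quadratic-form identity $\|w\|_2^2=\langle T^{-1}u,u\rangle$, which checks out. The two arguments ultimately bound the same quantity: in the basis $U_0$, your operator $T$ restricted to $\mathcal{A}$ is exactly $U_0^TU_C^{\bot}(U_C^{\bot})^TU_0$, whose smallest eigenvalue $\sin^2(\theta)$ is the square of the paper's $\sigma_{min}\left((U_C^{\bot})^TU_0\right)$; so the proofs are dual views of one spectral estimate. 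What your version buys is self-containedness (only the definitional property $\|\mathcal{P}_{\mathcal{B}}u\|_2\leq\cos(\theta)\|u\|_2$ of the principal angle is used, with no citation needed) and a transparent geometric explanation: every unit $v$ in the row space of $X_0$ can be certified by a vector $w$ that avoids the outlier directions entirely, at a cost of $1/\sin(\theta)$. What it costs is length and operator bookkeeping, whereas the paper's computation is a few lines. One small remark: your appeal to Lemma \ref{lemma:setting2:sufficient} for consistency of $X^Tw=v$ is dispensable, since your explicit construction of $w$ already proves the system is solvable.
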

\begin{rmk}To ensure that $\beta$ is reasonably large, the above lemma suggests that the outlier magnitude should not be too large comparing to the sample magnitude. This is verified by our numerical experiments, as shown in Fig.\ref{fig:pose}.
\end{rmk}
\begin{rmk} To ensure that $\beta$ is reasonably large, the above lemma also suggests that the principal angle $\theta$ should be notably large; that is, the outliers in LRR are restricted to the data points which are notably far way from the underlying subspaces. This conclusion is consistent with the experimental observations reported in (Liu et al., 2010a), which shows that LRR can distinguish between the outliers and the corrupted samples, where a corrupted sample is drawn from the subspaces, but is corrupted to be away from the underlying subspaces.
\end{rmk}
\subsubsection{Incoherence}
Finally, as now standard (Cand{\`e}s and Recht, 2009; Cand{\`e}s et al., 2009; Xu et al., 2010), we require the \emph{incoherence condition} to hold, to avoid the issue of un-identifiability. As an extreme example, consider the case where the data matrix $X_0$ is non-zero in only one column. Such a matrix is both low-rank and column-sparse, thus the problem is unidentifiable. To make the problem meaningful, the low-rank matrix $X_0$ cannot itself be column-sparse. This is ensured via the following incoherence
condition.
\begin{defn} The matrix $X_0\in\mathbb{R}^{d\times{}n}$ with SVD $X_0=U_0\Sigma{}_0V_0^T$, $\rank{X_0}=r_0$ and $(1-\gamma)n$ of whose columns are non-zero, is said to be column-incoherent with parameter $\mu$ if
\begin{eqnarray}\label{eq:mu}
\max_{i}\|V_0^T\mathbf{e}_i\|^2\leq\frac{\mu{}r_0}{(1-\gamma)n},
\end{eqnarray}
where $\{\mathbf{e}_i\}$ are the standard basis vectors.
\end{defn}
Thus if $V_0$ has a column aligned with a coordinate axis, then $\mu = (1-\gamma)n/r_0$. Similarly, if $V_0$ is perfectly incoherent (e.g., if $r_0=1$ and every non-zero entry of $V_0$ has magnitude $1/\sqrt{(1-\gamma)n}$ ), then $\mu=1$.

\subsection{The Main Result}
Although the LRR problem \eqref{eq:lrr} may have multiple solutions, we show that any solution $(Z^*,C^*)$ to \eqref{eq:lrr} exactly recovers the row space of the low-rank matrix $X_0$, and the column support of $C_0$. The main result of this paper is shown in the following theorem.
\begin{thm}\label{theorem:lrr:outlier}
Suppose a given data matrix $X$ is generated by $X=X_0+C_0$, where $X_0$ is of rank $r_0$, $X$ has RWD parameter $\beta$ and $X_0$ has incoherence parameter $\mu$. Suppose $C_0$ is supported on $\gamma{}n$ columns. Let $\gamma^*$ be such that
\begin{eqnarray}\label{eq:gamma}
\frac{\gamma^*}{1-\gamma^*}=\frac{324\beta^2}{49(11+4\beta)^2\mu{}r_0},
\end{eqnarray}
then LRR with parameter $\lambda=\frac{3}{7\|X\|\sqrt{\gamma^*n}}$ strictly succeeds, as long as $\gamma\leq\gamma^*$ and \eqref{eq:setting:2} holds. Here, the success is in a sense that any optimal solution $(Z^*,C^*)$ to \eqref{eq:lrr} can produce
\begin{eqnarray}\label{eq:exact}
U^*(U^*)^T=V_0V_0^T&\textrm{and}&\mathcal{I}^*=\mathcal{I}_0,
\end{eqnarray}
where $U^*$ is the column space of $Z^*$, and $\mathcal{I}^*$ is column support of $C^*$.
\end{thm}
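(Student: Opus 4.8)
The plan is to prove the theorem by a dual-certificate (KKT / subgradient) argument, the deterministic analogue of the exact-recovery proofs for Robust PCA. First I would exhibit the natural candidate solution $(Z_0,C_0)=(V_0V_0^T,\,C_0)$ and verify that it is feasible. The crucial structural fact is that outliers occupy a column set $\mathcal{I}_0$ disjoint from the authentic samples: since $X_0$ vanishes on the columns indexed by $\mathcal{I}_0$ and $U_0\Sigma_0$ has full column rank, the corresponding rows of $V_0$ vanish too, so $V_0V_0^T$ is supported on $\mathcal{I}_0^c\times\mathcal{I}_0^c$ and hence $C_0V_0V_0^T=0$. Consequently $XV_0V_0^T=(X_0+C_0)V_0V_0^T=X_0V_0^TV_0V_0^T\cdots=X_0$, whence $X-XV_0V_0^T=C_0$ and the candidate is feasible. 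By construction it already enjoys the two target properties: the column space of $V_0V_0^T$ is $V_0$, and the column support of $C_0$ is $\mathcal{I}_0$. It therefore remains to certify that this candidate is optimal and, more strongly, that \emph{every} optimizer shares these properties.

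Next I would write the subgradient optimality conditions. Introducing a Lagrange multiplier $\Lambda$ for the constraint $X=XZ+C$, the pair $(V_0V_0^T,C_0)$ is optimal provided $X^T\Lambda\in\partial\norm{V_0V_0^T}_*$ and $\Lambda\in\lambda\,\partial\norm{C_0}_{2,1}$. Since the SVD of $V_0V_0^T$ is $V_0\,\Id\,V_0^T$, the space $T$ is generated by $U=V=V_0$, and these conditions unfold into: (i) $\mathcal{P}_T(X^T\Lambda)=V_0V_0^T$; (ii) $\norm{\mathcal{P}_{T^\perp}(X^T\Lambda)}<1$; (iii) $[\Lambda]_i=\lambda[C_0]_i/\norm{[C_0]_i}_2$ for $i\in\mathcal{I}_0$; and (iv) $\norm{[\Lambda]_i}_2<\lambda$ for $i\notin\mathcal{I}_0$. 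The two \emph{strict} inequalities are what will later upgrade optimality of the candidate to \emph{strict success}. The feature that distinguishes this from a standard RPCA certificate is that $\Lambda$ enters (i)--(ii) only through $X^T\Lambda$; building $\Lambda$ thus requires effectively inverting the corrupted, possibly ill-conditioned dictionary, and this is exactly where $\Sigma_X^{-1}V_X^T$, and hence the RWD parameter $\beta$ of \eqref{eq:beta}, appears.

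The heart of the proof is the explicit construction and verification of $\Lambda$. A natural first piece is $\Lambda_1=U_X\Sigma_X^{-1}V_X^TV_0V_0^T$, which, using the necessary condition \eqref{eq:setting:2} (so that $V_0\subseteq V_X$), yields $X^T\Lambda_1=V_0V_0^T$ exactly; moreover $\norm{[\Lambda_1]_i}_2\le\norm{\Sigma_X^{-1}V_X^TV_0}\,\norm{V_0^T\mathbf{e}_i}_2$, which the RWD bound and the incoherence bound \eqref{eq:mu} control directly. I would then add a correction $\Lambda_2$ supported on $\mathcal{I}_0$ and equal to $\lambda[C_0]_i/\norm{[C_0]_i}_2$ there to enforce (iii), together with a back-correction removing its $T$-image so that (i) is preserved. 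The main obstacle is then the pair of strict norm bounds (ii) and (iv): one must show $\norm{\mathcal{P}_{T^\perp}(X^T\Lambda)}<1$ and $\max_{i\notin\mathcal{I}_0}\norm{[\Lambda]_i}_2<\lambda$, where RWD tames the $T^\perp$ spectral norm despite the noisy dictionary and incoherence tames the off-support columns via $\norm{V_0^T\mathbf{e}_i}^2\le\mu r_0/((1-\gamma)n)$. Balancing these two estimates against the coupling among $\norm{X}$, $\lambda$, $\gamma$, $\mu r_0$ and $\beta$ is precisely what forces the threshold $\gamma^*$ of \eqref{eq:gamma} and the prescribed $\lambda=3/(7\norm{X}\sqrt{\gamma^*n})$. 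I expect the cross-terms coming from $\Lambda_2$ (the outlier columns fed back through $X^T$) to be the most laborious part of these estimates.

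Finally, to establish \emph{strict success}, that \emph{any} optimizer $(Z^*,C^*)$ satisfies \eqref{eq:exact}, I would combine the strict certificate with an injectivity argument. Writing $N_Z=Z^*-V_0V_0^T$ and $N_C=C^*-C_0$, feasibility gives $XN_Z=-N_C$. The subgradient inequality applied with $\Lambda$ shows the objective cannot be matched unless $\mathcal{P}_{T^\perp}(N_Z)=0$ and $\mathcal{P}_{\mathcal{I}_0^c}(N_C)=0$, the strict inequalities (ii) and (iv) being exactly what rule out nonzero such components. Invoking Lemma~\ref{lemma:lrr:solution} ($Z^*\in\mathcal{P}_{V_X}^L$) together with \eqref{eq:setting:2}, an injectivity property of $X$ on the relevant subspace propagates $\mathcal{P}_{\mathcal{I}_0^c}(N_C)=0$ into the row space and yields $U^*(U^*)^T=V_0V_0^T$ and $\mathcal{I}^*=\mathcal{I}_0$. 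The technical care here lies in ensuring that a deviation cannot hide inside the kernel of $X$ while simultaneously respecting the constraint and the support structure.
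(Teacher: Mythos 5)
There is a genuine gap, and it sits at the heart of your plan: you propose to build a dual certificate for the pair $(V_0V_0^T,\,C_0)$, but under the theorem's hypotheses this pair is in general \emph{not} an optimal solution of \eqref{eq:lrr}, so no certificate for it can exist. To see this, combine your conditions (i) and (iii). Since the SVD of $V_0V_0^T$ is $V_0\,\Id\,V_0^T$, applying $\mathcal{P}_{V_0}^L$ to (i) gives $\mathcal{P}_{V_0}^L(X^T\Lambda)=V_0V_0^T$; then applying $\mathcal{P}_{\mathcal{I}_0}$, using the commutation rules of Lemma~\ref{lemma:basic:1} and condition (iii), yields
\begin{eqnarray*}
\lambda\mathcal{P}_{V_0}^L\left(X^T H_0\right)=\mathcal{P}_{V_0}^L\left(X^T\mathcal{P}_{\mathcal{I}_0}(\Lambda)\right)
=\mathcal{P}_{\mathcal{I}_0}\left(V_0V_0^T\right)=V_0\mathcal{P}_{\mathcal{I}_0}(V_0^T)=0,
\end{eqnarray*}
where $H_0$ is the column-normalized $C_0$, and the last equality holds because the rows of $V_0$ indexed by $\mathcal{I}_0$ vanish (a fact you yourself use to prove feasibility). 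But $\mathcal{P}_{V_0}^L(X^TH_0)=V_0V_0^T(X_0+C_0)^TH_0=X_0^TH_0$, since $V_0V_0^TX_0^T=X_0^T$ and $V_0V_0^TC_0^T=0$. Hence conditions (i) and (iii) force $X_0^TC_0=0$, i.e., every outlier must be \emph{orthogonal} to $\spn{X_0}$. The theorem assumes only \eqref{eq:setting:2} (implied by independence of $\spn{C_0}$ and $\spn{X_0}$), not orthogonality, so your two equality conditions are mutually inconsistent in general. Because these equalities are necessary for optimality of $(V_0V_0^T,C_0)$ in this convex problem, the candidate itself is generically suboptimal; no correction term $\Lambda_2$ or ``back-correction'' can repair this. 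This is also why the theorem is stated as $U^*(U^*)^T=V_0V_0^T$ and $\mathcal{I}^*=\mathcal{I}_0$ in \eqref{eq:exact}, and not as $Z^*=V_0V_0^T$, $C^*=C_0$.

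The paper's proof is designed precisely to get around this obstruction, via the \emph{oracle problem}: minimize the LRR objective subject to the extra constraints $\mathcal{P}_{V_0}^L(Z)=Z$ and $\mathcal{P}_{\mathcal{I}_0}(C)=C$, and certify its optimum $(\hat{Z},\hat{C})$ instead. Theorem~\ref{the:u=v0-I=I0} guarantees that any such feasible pair automatically has exactly the desired row space and column support, and—crucially—the KKT conditions of the oracle problem itself supply the compatibility identity $V_0\mathcal{P}_{\mathcal{I}_0}(\bar{V}^T)=\lambda\mathcal{P}_{V_0}^L(X^T\hat{H})$ of Lemma~\ref{lemma:H}, which is exactly the relation that made your candidate fail and which makes the analogues of (i) and (iii) simultaneously satisfiable at $(\hat{Z},\hat{C})$. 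The remaining ingredients of your outline—the dictionary inversion $U_X\Sigma_X^{-1}V_X^T$ controlled by the RWD parameter $\beta$, the incoherence bound on the off-support columns, and the strict-inequality-plus-injectivity argument showing that \emph{every} optimizer inherits the structure—do correspond to the paper's Theorems~\ref{the:optimal-conditions} and \ref{lemma:dual-certificate}, but they must be run at the oracle solution, not at $(V_0V_0^T,C_0)$.
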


The performance (i.e., the value of $\gamma^*$) of LRR depends on the properties of data, mainly including the rank $r_0$ (the lower the better), the RWD parameter $\beta$ (the larger the better), and the incoherence parameter $\mu$ (the smaller the better). Interestingly, the above theorem also implies that LRR can be used to solve a challenging PCA problem (which is presented in the Introduction), which is to recover the column space with corrupted features (i.e., row-wise corruption). By solving the transposed version of LRR (see the Introduction), the row space of $X_0^T$, i.e., the column space of $X_0$, can be recovered.
\subsection{Proof Outline}
In this section we provide an outline for the proof of Theorem \ref{theorem:lrr:outlier}. The full proof appears in the appendix section. The proof follows three main steps.
\begin{itemize}
\item[1.] Identify the necessary and sufficient conditions (called equivalent conditions), for any pair $(Z',C')$ to produce the exact results \eqref{eq:exact}.
\item[2.] For a candidate pair $(Z',C')$ that respectively has the desired row space and column support, identify the sufficient conditions for $(Z',C')$ to be an optimal solution to the LRR problem \eqref{eq:lrr}. These conditions are called \emph{dual conditions}.
\item[3.] Show that the dual conditions can be satisfied, i.e., construct the \emph{dual certificates}.
\end{itemize}

\textbf{Equivalent Conditions:} For any feasible pair $(Z',C')$ that satisfies $X=XZ'+C'$, let the SVD of $Z'$ as $U'\Sigma'V'^T$ and the column support of $C'$ as $\mathcal{I}'$. In order to produce the exact results \eqref{eq:exact}, on the one hand, a necessary condition is that $\mathcal{P}_{V_0}^L(Z')=Z'$ and $\mathcal{P}_{\mathcal{I}_0}(C')=C'$, as this is nothing but $U'$ is a subspace of $V_0$ and $\mathcal{I}'$ is a subset of $\mathcal{I}_0$. On the other hand, it can be proven that $\mathcal{P}_{V_0}^L(Z')=Z'$ and $\mathcal{P}_{\mathcal{I}_0}(C')=C'$ are sufficient to ensure $U'U'^T=V_0V_0^T$ and $\mathcal{I}'=\mathcal{I}_0$. So, the exactness described in \eqref{eq:exact} can be equally transformed into two constraints: $\mathcal{P}_{V_0}^L(Z')=Z'$ and $\mathcal{P}_{\mathcal{I}_0}(C')=C'$, which we will use to construct an oracle problem to facilitate the proof.

\textbf{Dual Conditions:} For the pair $(Z',C')$ that satisfies $X=XZ'+C'$, $\mathcal{P}_{V_0}^L(Z')=Z'$ and $\mathcal{P}_{\mathcal{I}_0}(C')=C'$, let the SVD of $Z'$ as $U'\Sigma'V'^T$ and the column-normalized version of $C'$ as $H'$. That is, column $[H']_i=\frac{[C']_i}{\|[C']_i\|_2}$ for all $i\in\mathcal{I}_0$, and $[H']_i=0$ for all $i\not\in\mathcal{I}_0$ (note that the column support of $C'$ is $\mathcal{I}_0$). Furthermore, define $\mathcal{P}_{T'}(\cdot)=\mathcal{P}_{U'}(\cdot)+\mathcal{P}_{V'}(\cdot)-\mathcal{P}_{U'}\mathcal{P}_{V'}(\cdot)$. With these notations, it can be proven that $(Z',C')$ is an optimal solution to LRR if there exists a matrix $Q$ that satisfies
\begin{eqnarray*}
\mathcal{P}_{T'}(X^TQ)=U'V'^T,&&\|X^TQ-\mathcal{P}_{T'}(X^TQ)\|<1\\
\mathcal{P}_{\mathcal{I}_0}(Q)=\lambda{}H',&&\|Q-\mathcal{P}_{\mathcal{I}_0}(Q)\|_{2,\infty}<\lambda.
\end{eqnarray*}
Although the LRR problem \eqref{eq:lrr} may have multiple solutions, it can be further proven that any solution has the desired row space and column support, provided the above conditions have been satisfied. So, the left job is to prove the above dual conditions, i.e., construct the dual certificates.

\textbf{Dual Certificates:} The construction of dual certificates mainly concerns a matrix $Q$ that satisfies the dual conditions. However, since the dual conditions also depend on the pair $(Z',C')$, we actually need to obtain three matrices, $Z'$, $C'$ and $Q$. This is done by considering an alternate optimization problem, often called the ``oracle problem''. The oracle problem arises by imposing the success conditions as additional constraints in \eqref{eq:lrr}:
\begin{eqnarray*}
\min_{Z,C} && \|Z\|_*+\lambda\|C\|_{2,1}\\\nonumber
\textrm{s.t.}&&X=XZ+C, \mathcal{P}_{V_0}^L(Z)=Z, \mathcal{P}_{\mathcal{I}_0}(C)=C.
\end{eqnarray*}
While it is not practical to solve the oracle problem since $V_0$ and $\mathcal{I}_0$ are both unknown, it significantly facilitate our proof.
Note that the above problem is always feasible, as $(V_0V_0^T,C_0)$ is feasible. Thus, an optimal solution, denoted as $(\hat{Z},\hat{C})$, exists. Observe that because of the two additional constraints, $(\hat{Z},\hat{C})$ satisfies \eqref{eq:exact}. Therefore, to show Theorem~\ref{theorem:lrr:outlier} holds, it suffices to show that $(\hat{Z},\hat{C})$ is the optimal solution to LRR.
 With this perspective, we construct the dual certificates using $(\hat{Z},\hat{C})$. Let the SVD of $\hat{Z}$ be $\hat{U}\hat{\Sigma}\hat{V}^T$, and the column-normalized version of $\hat{C}$ be $\hat{H}$. It is easy to see that there exists an orthonormal matrix $\bar{V}$ such that $\hat{U}\hat{V}^T=V_0\bar{V}^T$, where $V_0$ is the row space of $X_0$. Moreover, it is easy to show that $\mathcal{P}_{\hat{U}}(\cdot)=\mathcal{P}_{V_0}^L(\cdot)$, $\mathcal{P}_{\hat{V}}(\cdot)=\mathcal{P}_{\bar{V}}(\cdot)$, and hence the operator $\mathcal{P}_{\hat{T}}$ defined by $\hat{U}$ and $\hat{V}$, obeys $\mathcal{P}_{\hat{T}}(\cdot)=\mathcal{P}_{V_0}^L(\cdot)+\mathcal{P}_{\bar{V}}(\cdot)-\mathcal{P}_{V_0}^L\mathcal{P}_{\bar{V}}(\cdot)$.  Finally, the dual certificates are finished by constructing $Q$ as follows:
\begin{eqnarray*}
Q_1&\triangleq&\lambda\mathcal{P}_{V_0}^L(X^T\hat{H}),\\
Q_2&\triangleq&\lambda\mathcal{P}_{V_0^{\bot}}^L\mathcal{P}_{\mathcal{I}_0^c}\mathcal{P}_{\bar{V}}(\Id+\sum_{i=1}^{\infty}(\mathcal{P}_{\bar{V}}\mathcal{P}_{\mathcal{I}_0}\mathcal{P}_{\bar{V}})^i)\mathcal{P}_{\bar{V}}(X^T\hat{H}),\\
Q&\triangleq&U_X\Sigma_X^{-1}V_X^T(V_0\bar{V}^T+\lambda{}X^T\hat{H}-Q_1-Q_2),
\end{eqnarray*}
where $U_X\Sigma_XV_{X}^T$ is the SVD of the data matrix $X$.
\section{Experiments}\label{sec:diss}
Notice that LRR have been used to achieve state-of-the-art performances in several applications such as motion segmentation (Liu et al., 2010a; Liu and Yan, 2011; Favaro et al.,
2011), image segmentation (Chen et al., 2011), saliency detection (Lang et al., 2011) and face recognition (Liu and Yan, 2011). In particular, motion segmentation and image segmentation are typical examples of the subspace segmentation problem. Also, by using appropriate visual features to describe the images, saliency detection can be casted into an example of the outlier detection problem, as shown in (Lang et al., 2011). So, there have been extensive experiments to verify the effectiveness of LRR. Here, we shall further show some experimental results to verify the theoretical results obtained in this paper.
\begin{figure*}
\begin{center}
\includegraphics[width=0.7\textwidth]{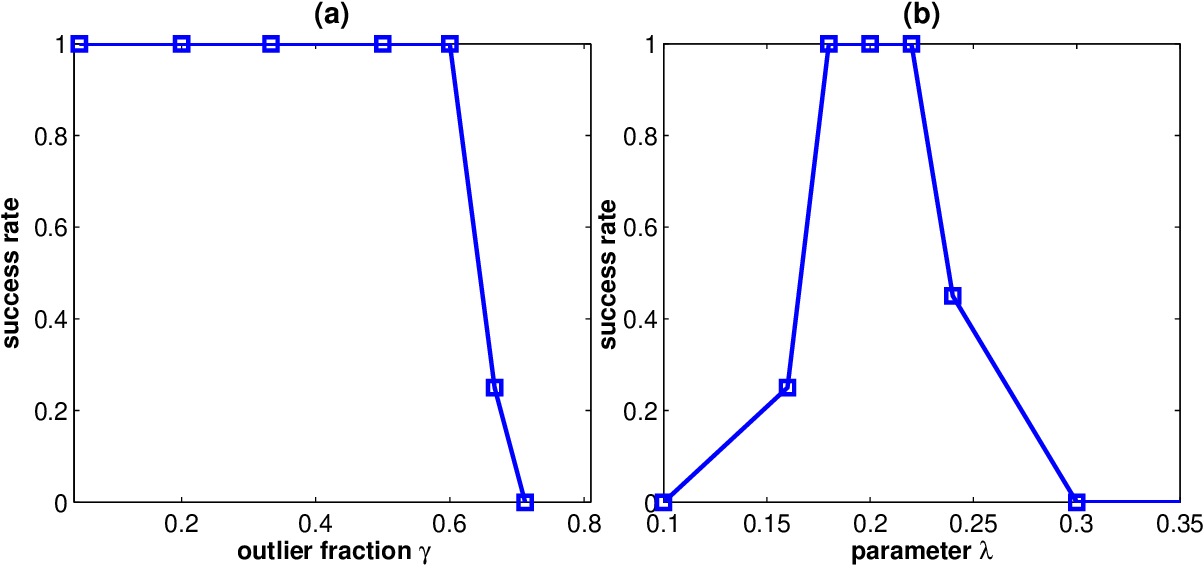}
\caption{The success rates obtained from 50 random trials. (a) When $\lambda=0.2$, the success rates obtained under various settings of the outlier fraction $\gamma$. (b) When the outlier fraction is fixed to be $\gamma=0.5$, plotting the success rate as a function of the parameter $\lambda$. In these experiments, the ``success'' is measured in terms of exact recovery, i.e., $U^*(U^*)^T=V_0V_0^T$ and $\mathcal{I}^*=\mathcal{I}_0$. } \label{fig:toy}
\end{center}
\end{figure*}
\subsection{Numerical Results}
Theorem \ref{theorem:lrr:outlier} states that there exists a parameter $\lambda$ such that LRR can work well while the outlier fraction is not larger than a certain threshold. To explore this, we construct 5 pairwise disjoint subspaces $\{\mathbb{S}_i\}_{i=1}^5$ whose bases $\{U_i\}_{i=1}^{5}\in\mathbb{R}^{500}$ are computed by $U_{i+1} = TU_{i},1\leq{}i\leq{4}$, where $T$ is a random rotation and $U_1$ is a random orthonomal matrix of dimension $500\times 5$. So, each subspace is of dimension 5. We sample 40 data samples from each subspace by $X_i=U_iR_i,1\leq{}i\leq{5}$ with $R_i$ being a $5\times{40}$ uniform matrix with a range from -1 to 1, and construct the sample matrix $X_0$ as $X_0=[X_1,\cdots,X_5]$. Some outliers are randomly generated from zero mean Gaussian distribution with standard deviation $s$, where $s$ is set to be the averaged absolute value of the samples, i.e., the samples and outliers approximately have the same magnitude.

While fixing all the other configurations, we change the number of outliers and the parameter $\lambda$. Then we observe whether the recovery is exact or not. Here, the exactness is in a sense that $\|U^*(U^*)^T-V_0V_0^T\|<10^{-4}$ (i.e., $U^*(U^*)^T=V_0V_0^T$), and $\mathcal{I}^*=\mathcal{I}_0$ with $\mathcal{I}^*=\{i:\|[C^*]_i\|_2\geq10^{-4}\|[X]_i\|_2\}$. Figure \ref{fig:toy}(a) shows that LRR can be exactly successful while $\gamma\leq0.6$, and Figure \ref{fig:toy}(b) illustrates that there exists a parameter range for obtaining exact recovery. These results are consistent with the statements in Theorem \ref{theorem:lrr:outlier}.
\subsection{Results on Real Data}
\begin{figure}
\begin{center}
\centerline{\includegraphics[width=0.45\textwidth]{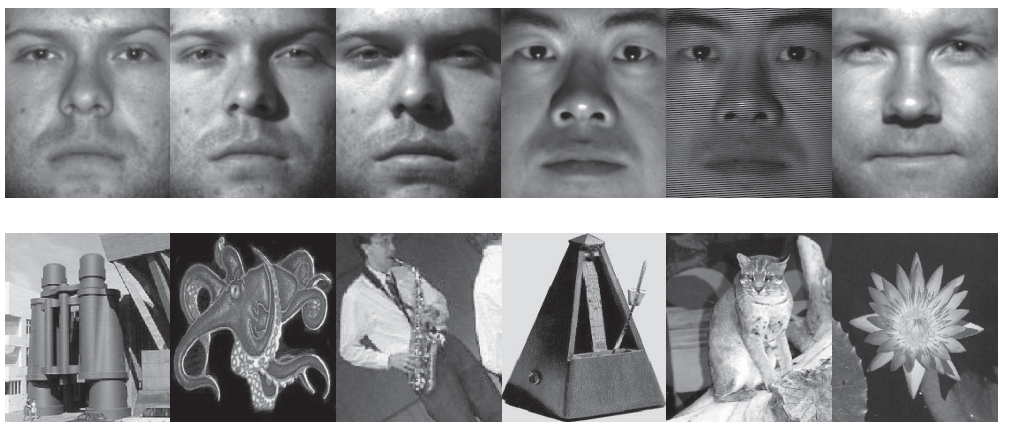}} \caption{Examples of the images in the Yale-Caltech dataset.} \label{fig:yaleb-cal:example}
\end{center}
\end{figure}
\subsubsection{Datasets} To test LRR's effectiveness in the presence of outliers and noise, we create a dataset by combing Extended Yale Database B (Lee et al., 2005) and Caltech101 (Li et al., 2004), so called as ``Yale-Caltech''. For Extended Yale Database B, we remove the images pictured under extreme light conditions. Namely, we only use the images with view directions smaller than 45 degrees and light source directions smaller than 60 degrees, resulting in 1204 authentic samples approximately drawn from a union of 38 low-rank subspaces (each face class corresponds to a subspace). For Caltech101, we only select the classes containing no more than 40 images, resulting in 609 non-face outliers. Fig.\ref{fig:yaleb-cal:example} shows some examples of this dataset.\\
\subsubsection{Evaluation Metrics}
\textbf{Segmentation Accuracy (ACC):} The segmentation results can be evaluated in a similar way as classification results. Nevertheless, since segmentation methods cannot provide the class label for each cluster, a postprocessing step is needed to assign each cluster a label: given the ground truth classification results, the label of a cluster is the index of the ground truth class that contributes the maximum number of samples to the cluster. Then, we compute the segmentation accuracy (ACC) as the percentage of correctly classified samples.\\
\\\textbf{Areas Under Curve (AUC):} As shown in Theorem \ref{theorem:lrr:outlier}, the minimizer $C^*$ (with respect to the variable $C$) can be used to detect the outliers that possibly exist in data. This can be simply done by finding the nonzero columns of $C^*$, when all or a fraction of data samples are clean. For the cases where the data is noisy and the learnt $C^*$ only approximately has sparse column supports, one could use thresholding strategy; that is, the $i$-th data vector of $X$ is judged to be outlier if and only if
\begin{eqnarray*}
\|[C^*]_{:,i}\|_2>\delta,
\end{eqnarray*}
where $\delta>0$ is a parameter. To evaluate the effectiveness of outlier detection without choosing a parameter $\delta$, we consider the receiver operator characteristic (ROC), which is widely used to evaluate the performance of binary classifiers. The ROC curve is obtained by trying all possible thresholding values, and for each value, plotting the true positives rate on the Y-axis against the false positive rate value on the X-axis. The areas under the ROC curve, known as AUC, provides a number for evaluating the quality of outlier detection. Note that the AUC score is the larger the better, and always ranges between 0 and 1.
\subsubsection{Results}
\begin{table}[t]
\caption{Segmentation accuracy (ACC) and AUC comparison on the Yale-Caltech dataset.}
\label{tb:yale}
\begin{center}
\begin{tabular}{|ccccc|}\hline
                &PCA            &RPCA$_1$       &RPCA$_{2,1}$     &LRR\\\hline
ACC (\%)        &77.15          &82.97          &83.72            &\textbf{86.13}\\
AUC             &0.9653         &0.9819         &0.9863           &\textbf{0.9927}\\
\hline
\end{tabular}
\end{center}
\end{table}
The goal of this test is to identify 609 non-face outliers and segment the rest 1204 face images into 38 clusters. The performance of segmentation and outlier detection is evaluated by ACC and AUC, respectively. While investigating segmentation performance, the affinity matrix is computed from all images, including both the face images and non-face outliers. However, for the convenience of evaluation, the outliers and the corresponding affinities are removed (according to the ground truth) before obtaining the segmentation results.

We resize all images into $20\times20$ pixels and form a data matrix $X$ of size $400\times1813$. Table \ref{tb:yale} shows the results of PCA, RPCA$_1$ (Cand{\`e}s et al., 2009), RPCA$_{2,1}$ (Xu et al., 2010) and LRR. It can be seen that LRR is better than PCA and RPCA methods, in terms of both subspace segmentation and outlier detection. Here, the advantages (in terms of subspace segmentation) of LRR are mainly due to its methodology. More precisely, LRR \emph{directly} targets on recovering the row space $V_0V_0^T$, which provably determines the segmentation results. In contrast, PCA and RPCA methods target on recovering the column space $U_0U_0^T$, which is designed for dimension reduction. This is why LRR are better than PCA and RPCA methods as a tool for subspace segmentation. In terms of outlier detection, LRR's advantages mainly come from the fact that this dataset has a structure of multiple subspaces, which fits well the assumptions of LRR (Liu et al., 2010a,b). Whereas, PCA and RPCA methods are based on the assumption that the data is sampled from a single subspace. When the data is drawn from a union of multiple subspaces, PCA and RPCA methods actually treat those multiple subspaces as a single one. Since the specifics of the individual subspaces are not well considered, they may lose some accuracy in the detection of outliers.
\section{Conclusion}\label{sec:conclusion}
This paper studies the problem of subspace segmentation in the presence of outliers. We analyzed a convex formulation termed LLR, and showed that the optimal solution exactly recovers the row space of the authentic data and identifies the outliers. Since the row space determines the segmentation of data, LRR can perform subspace segmentation and outlier identification simultaneously.

The analysis presented in this paper differs from previous work (e.g., Cand{\`e}s et al., 2009; Xu et al., 2010) largely due to the fact that the dictionary used in \eqref{eq:lrr} is the data matrix $X$, as opposed to the (arguably easier) identity matrix $\Id$ used in (Cand{\`e}s et al., 2009; Xu et al., 2010). As a future direction, it is interesting to see whether the technique presented can be extended to general dictionary matrices other than $X$ or $\Id$.
\section*{References}
Emmanuel Cand{\`e}s and Yaniv Plan (2010). Matrix completion with noise. In \emph{Proceeding of IEEE}, {\bf 98}: 925-936.

Emmanuel Cand{\`e}s and Benjamin Recht (2009). Exact matrix completion via convex optimization.
\emph{Foundations of Computational Mathematics}, {\bf 9}(6):717-772.

Emmanuel Cand{\`e}s, Xiaodong Li, Yi~Ma and John Wright (2009). Robust principal component analysis? \emph{Journal of the ACM, to appear}.

Bing Chen, Guangcan Liu, Zhongyang Huang and Shuicheng Yan (2011). Multi-task low-rank affinities pursuit for image segmentation. In \emph{IEEE International Conference on Computer Vision}.

Paulo Costeira and Takeo Kanade (1998). A multibody factorization method for independently moving objects. \emph{International Journal on Computer Vision}, {\bf 29}(3): 159-179.

Yonina Eldar and Moshe Mishali (2009). Robust recovery of signals from a structured union of subspaces. \emph{IEEE Transactions on Information Theory}, {\bf 55}(11): 5302-5316.

E.~Elhamifar and Ren{\'e} Vidal (2009). Sparse subspace clustering. In \emph{IEEE Conference on Computer Vision and Pattern Recognition}, {\bf 2}: 2790-2797.

Paolo Favaro, Ren{\'e} Vidal and Avinash Ravichandran (2011). A closed form solution to robust subspace estimation and clustering. In \emph{IEEE Conference on Computer Vision and Pattern Recognition}.

M.~Fazel (2002). Matrix rank minimization with applications. \emph{PhD thesis}.

Fei-Fei Li, Rob Fergus and Pietro Perona (2004). Learning generative visual models from few training examples: An
  incremental bayesian approach tested on 101 object categories. In \emph{Workshop of CVPR}.

Martin Fischler and Robert Bolles (1981). Random sample consensus: a paradigm for model fitting with applications to image analysis and automated cartography.
\emph{Commun. ACM}, {\bf 24}(6): 381-395.

Charles William Gear (1998). Multibody grouping from motion images. \emph{International Journal on Computer Vision}, {\bf 29}(2): 133-150.

Amit Gruber and Yair Weiss (2004). Multibody factorization with uncertainty and missing data using the {EM} algorithm. In \emph{IEEE Conference on Computer Vision and Pattern Recognition},
  {\bf 1}: 707-714.

S.J. Devlin, R. Gnanadesikan and J.R. Kettenring (1981). Robust estimation of dispersion matrices and principal components. \emph{Journal of the American Statistical Association}, {\bf 76}(374): 354-362.

Andrew Knyazev and Merico Argentati (2002). Principal angles between subspaces in an A-based scalar product: Algorithms and perturbation estimates. \emph{SIAM J. Sci. Comput}, {\bf 23}: 2009--2041.

Fernando Torre and Michael Black (2001). Robust principal component analysis for computer vision. In \emph{IEEE International Conference on Computer Vision}, 362-369.

Congyan Lang, Guangcan Liu, Jian Yu and Shuicheng Yan (2011). Saliency detection by multi-task sparsity pursuit.
\newblock \emph{IEEE Transactions on Image Processing, to appear}.

Kuang-Chih Lee, Jeffrey Ho and David Kriegman (2005). Acquiring linear subspaces for face recognition under variable lighting. \emph{IEEE Trans. Pattern Anal. Mach. Intell.}, {\bf 27}(5): 684-698.

Guangcan Liu and Shuicheng Yan (2011). Latent low-rank representation for subspace segmentation and feature
  extraction. In \emph{IEEE International Conference on Computer Vision}.

Guangcan Liu, Zhouchen Lin, Shuicheng Yan, Ju Sun, Yi Ma and Yong Yu (2010a). Robust recovery of subspace structures by low-rank representation.
\newblock \emph{Preprint}.

Guangcan Liu, Zhouchen Lin and Yong Yu (2010b). Robust subspace segmentation by low-rank representation. In \emph{International Conference on Machine Learning (ICML)}, 663-670.

Guangcan Liu, Zhouchen Lin, Yong Yu and Xiaoou Tang (2010c). Unsupervised object segmentation with a hybrid graph model ({HGM}). \emph{IEEE Transactions on Pattern Analysis and Machine
  Intelligence}, {\bf 32}(5): 910-924.

Yi Ma, Harm Derksen, Wei Hong and John Wright (2007). Segmentation of multivariate mixed data via lossy data coding and compression. \emph{IEEE Transactions on Pattern Analysis and Machine
  Intelligence}, {\bf 29}(9): 1546-1562.

Yi Ma, Allen Yang, Harm Derksen and Robert Fossum (2008). Estimation of subspace arrangements with applications in modeling and
  segmenting mixed data. \emph{SIAM Review}, {\bf 50}(3): 413-458.

Shankar Rao, Roberto Tron, Ren{\'e} Vidal and Yi Ma (2010). Motion segmentation in the presence of outlying, incomplete, or corrupted trajectories. \emph{IEEE Transactions on Pattern Analysis and Machine
  Intelligence}, {\bf 32}(10): 1832-1845.

R.Tyrrell Rockafellar (1970). \emph{Convex Analysis}. Princeton University Press, NJ, USA.

Ren{\'e} Vidal (2011). Subspace clustering. \emph{IEEE Signal Processing Magazine}, {\bf 28}(2): 52-68.

John Wright, Arvind Ganesh, Shankar Rao, Yigang Peng and Yi Ma (2009). Robust principal component analysis: Exact recovery of corrupted
  low-rank matrices via convex optimization. In \emph{Advances in Neural Information Processing Systems}, 2080-2088.

Huan Xu, Constantine Caramanis and Sujay Sanghavi (2010). Robust PCA via outlier pursuit. In \emph{Advances in Neural Information Processing Systems}.
\end{document}